\newcommand{\xjomega}[1]{x_{#1}^{\omega}}
\newcommand{\DensityMatrixNbeta}{\varrho_N^{\beta,\omega}}
\newcommand{\DensityMatrixNbetaeins}{\varrho_N^{\beta,(1),\omega}}
\newcommand{\AverageParticleDensityNbeta}{\rho_N^{\beta,\omega}}
\newcommand{\EnsembleExpectation}[1]{\langle #1 \rangle_{\DensityMatrixNbeta}}
\newcommand{\ud}{\mathrm{d}}
\DeclareMathOperator{\tr}{Tr}
\DeclareMathOperator{\supp}{supp}
\numberwithin{equation}{section}
\newtheorem{theorem}{Theorem}[section]
\newtheorem{lemma}[theorem]{Lemma}
\newtheorem{remark}[theorem]{Remark}
\theoremstyle{definition}
\newtheorem{defn}[theorem]{Definition}
\numberwithin{equation}{section}
\begin{document}
	
	\allowdisplaybreaks[1]
	
	\thispagestyle{empty}
	
	\vspace*{1cm}
	
	\begin{center}
		
		{\Large \bf  On the effect of repulsive interactions on Bose--Einstein condensation in the Luttinger--Sy model} \\

		\vspace*{2cm}
		
		{\large  Joachim~Kerner \footnote{E-mail address: {\tt joachim.kerner@fernuni-hagen.de}} }%
		
		\vspace*{5mm}
		
		Department of Mathematics and Computer Science\\
		FernUniversität in Hagen\\
		58084 Hagen\\
		Germany\\
		
		\vspace*{2cm}

		{\large  Maximilian Pechmann \footnote{E-mail address: {\tt mpechmann@utk.edu}} }%
		
		\vspace*{5mm}
		
		Department of Mathematics\\
		University of Tennessee\\
		Knoxville, TN 37996\\
		USA\\

	\end{center}
	
	\vfill
	
	\begin{abstract} In this paper we investigate the effect of repulsive pair interactions on Bose--Einstein condensation in a well-established random one-dimensional system known as the Luttinger--Sy model at positive temperature. We study separately hard core interactions as well as a class of more general repulsive interactions, also allowing for a scaling of certain interaction parameters in the thermodynamic limit. As a main result, we prove in both cases that for sufficiently strong interactions all eigenstates of the non-interacting one-particle Luttinger--Sy Hamiltonian as well as any sufficiently localized one-particle state are $\mathds P$-almost surely not macroscopically occupied.
	\end{abstract}
	
	\newpage
	
	\section{Introduction}

	Bose--Einstein condensation (BEC) refers to a quantum phenomenon in bosonic many-particle systems that occurs at low temperature or high particle density. Originally, this new phenomenon was predicted by Einstein to occur in a non-interacting Bose gas in three dimensions \cite{PaperEinstein1,PaperEinstein2}. The key feature of (conventional) BEC is the so-called macroscopic occupation of a one-particle state, which means that the number of particles occupying the same one-particle state grows proportional to the volume of the system. In a non-interacting Bose gas, the one-particle states that might be macroscopically occupied are simply the eigenstates of the underlying one-particle Hamiltonian. In an interacting many-particle system, the situation is much more complex since one has to study the eigenstates of the reduced one-particle density matrix which is generally hard to construct \cite{PO56}. Consequently, it is not surprising that the first rigorous proof of BEC in an interacting system was established not long ago \cite{lieb2002proof,Lieetal05}. Also, the study of BEC in interacting systems is still a very active area in mathematical physics, see \cite{SeiDeuchYngvason,SeiDeuch,SchleinBEC} and references therein. 
	
	It is well-known that BEC in one dimension differs significantly from BEC in three and higher dimensions. For example, the non-interacting one-dimensional Bose gas (with standard boundary conditions) exhibits no BEC \cite{VerbeureBook}. However, as realized by Luttinger and Sy some fifty years ago \cite{luttinger1973bose}, BEC is recovered in the presence of a random external potential, see also \cite{kac1973bose,kac1974bose}. Loosely speaking, in the Luttinger--Sy model (LS model) the random external potential consists of a sum of Dirac-$\delta$ potentials of infinite interaction strength situated at random points distributed along $\mathds{R}$. Consequently, the real line is almost surely dissected into a countable number of intervals of random lengths on which the one-particle Hamiltonian simply acts as the Dirichlet Laplacian. It is interesting to remark that the LS model, despite the singular nature of the external potential, is considered to be a good approximation of more realistic random external potentials \cite[p.~11]{VeniaminovKlopp}, \cite[p.~8]{LenobleZagrebnovLuttingerSy} and thus plays an important role in the study of BEC in a random environment \cite{luttinger1973bose,lenoble2004bose,LenobleZagrebnovLuttingerSy,SeiYngZag12,KPS182}. On a rigorous level, BEC in the non-interacting LS model was first proved in \cite{LenobleZagrebnovLuttingerSy}.
	
	In this paper, we study existence of BEC in a system of interacting particles in $\mathds{R}$ and with a (singular) random external potential generated through a Poisson point process in the canonical ensemble at positive temperature in the thermodynamic limit. In other words, without interparticle interactions, the underlying model we consider is simply the LS model. Generalizing the ideas and methods developed in \cite{PS86,PuleAonghusa87,BolteKernerInstability}, it is our aim to investigate if the ground state of the non-interacting one-particle Luttinger--Sy Hamiltonian (or its other eigenstates for that matter) can be macroscopically occupied after one introduces repulsive pair interactions. Intuitively, one indeed expects that the introduction of repulsive interactions lowers the density of particles occupying a given state \cite{LVZ03}. However, as we will see, the surprising observation is that an introduction of a sufficiently strong (which nevertheless might be arbitrarily small in $L^1$ norm) repulsive pair interaction between the particles prohibits a macroscopic occupation of these one-particle eigenstates (and, more general, of any sufficiently localized one-particle state) in the LS model. In other words, up to the fact that we will work in the canonical rather than in the grand canonical ensemble, we prove that the BEC that is present in the non-interacting LS model is destroyed through the introduction of sufficiently strong repulsive pair interactions.
	
	The paper is organized as follows: In Section~\ref{SectionBasics} we introduce the non-interacting LS model and collect results which are important in the sequel. In Section~\ref{SectionHC} and~\ref{SectionGeneral} we consider hard core pair interactions and a more general class of repulsive pair interactions, respectively, and prove in either case that almost surely no eigenstate of the non-interacting one-particle Luttinger--Sy Hamiltonian is macroscopically occupied at positive temperatures. In addition, we mention an extension of the results to arbitrary one-particle states that are sufficiently localized. Finally, we refer to the appendix, Section~\ref{Appendix}, for an auxiliary result. 
	
	\section{The non-interacting LS model}\label{SectionBasics}
	
	In this section we introduce the non-interacting Luttinger--Sy model (LS model) and also state some of its main properties that we use in the following sections.
 Since the LS model is a random model, there is an underlying probability space which we shall denote by  $(\Omega,\mathcal{A},\mathds{P})$. 
 
 Let $\Lambda_{N}=(-L_N/2,+L_N/2) \subset \mathds{R}$ be an interval on which we place $N$ (bosonic) particles. Working in the canonical ensemble at finite inverse temperature $\beta \in (0,\infty)$, we require that
 \begin{equation}
 \frac{N}{L_N}=:\rho
 \end{equation}
 holds for all values $N \in \mathds{N}$. Here, $\rho > 0$ denotes the particle density. The thermodynamic limit is then obtained as the limit $N \rightarrow \infty$.
 
 In the non-interacting LS model, the $N$-particle Hamiltonian is informally given by
 \begin{equation}\label{NParticleHamiltonianLS}
 \begin{split}
 H^0_{N}(\omega):&=\sum_{j=1}^{N}\left(-\frac{\partial^2}{\partial x_j^2}+V(\omega,x_j)\right)
 \end{split}
 \end{equation}
 where $V(\omega,\cdot)$ is a singular random external potential defined as
 \begin{equation} \label{random external Potential definition}
 V(\omega,x):=\gamma\sum_{j \in \mathds{Z}}\delta(x-\xjomega{j}) \quad \text{ with } \quad \gamma=\infty \ .
 \end{equation}
 Here, $\delta(\cdot)$ is the Dirac-$\delta$ distribution. Furthermore, $(\xjomega{j})_{j \in \mathds{Z}}$ is a random sequence of points generated by the Poisson point process on $\mathds R$ with intensity $\nu > 0$. This means that the probability of finding $n$ such points in a Borel set $A \subset \mathds{R}$ with measure $|A|$ is given by $\frac{\nu |A|^n}{n!}\mathrm{e}^{-\nu |A|}$. In addition, the probability of finding $n$ such points in the set $A$ and $m$ points in the Borel set $B \subset \mathds{R}$ with $A \cap B = \emptyset$ is given by the corresponding product of probabilities. $\mathds{P}$-almost surely, the random points can be written as 
 \begin{equation*}
 ... < \xjomega{-1} < \xjomega{0} < 0 < \xjomega{1} < ...
  \end{equation*}
  and accordingly the real line is almost surely dissected into a countable number of intervals $\{(x_j^{\omega}, x_{j+1}^{\omega})\}_{j \in \mathds Z}$. If not stated otherwise, we always choose an $\omega \in \Omega$ with this typical property in the remainder of the paper. Also, $(\hat l^{j,\omega})_{j \in \mathds Z \backslash\{0\}}$ where $\hat l^{j,\omega}:= (\xjomega{j},\xjomega{j+1})$ are independent, identically distributed random variables with common probability density function $\nu \mathrm{e}^{-\nu l} \mathds 1_{(0,\infty)}(l), l \in \mathds R$. Here $\mathds 1_{A}$ is the characteristic function of a set $A \subset \mathds R$. For more details regarding the Poisson point process on $\mathds R$, we refer the reader to \cite[Chapter 4]{kingman1993poisson}. Lastly, we note that the informal choice $\gamma=\infty$ in \eqref{random external Potential definition} means that one imposes Dirichlet boundary conditions at each $\xjomega{j}$, $j \in \mathds Z$.
  
  Since for every $N \in \mathds N$ all $N$ bosons are confined to the interval $\Lambda_{N}$, we introduce the subintervals
  \begin{equation*}
   I_N^{j,\omega} := (\xjomega{j},\xjomega{j+1}) \cap \Lambda_N
  \end{equation*}
  as well as the (random) lengths
  \begin{equation*}
  l^{j,\omega}_N:= |I_N^{j,\omega}|= |(\xjomega{j},\xjomega{j+1}) \cap \Lambda_N|
  \end{equation*}
  and set $l^{(1),\omega}_{N,>}:=\max_{j \in \mathds{Z}}\{l^{j,\omega}_N\}$. Regarding the largest length $l^{(1),\omega}_{N,>}$ one has the following important statement; see, e.g., \cite[Theorem B.2]{SeiYngZag12} and \cite[Theorem 5.1.6]{pechmanndiss}.
  \begin{lemma}\label{LargestComponent} Let $\nu > 0$ be the intensity of the underlying Poisson point process. Then, for all $\alpha > 4$, all $0 < \varepsilon < 1$, and $\mathds{P}$-almost all $\omega \in \Omega$ there exists an $N_0 \in \mathds{N}$ such that for all $N > N_0$,  
  	\begin{equation}
  	\nu^{-1}\left[\ln(L_N)-(1+\varepsilon)\ln(\ln(L_N)) \right] \leq l^{(1),\omega}_{N,>} \leq \alpha \nu^{-1} \ln(L_N) \ .
  	\end{equation}
  \end{lemma}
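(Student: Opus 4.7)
The plan is to reduce the statement to a classical extreme-value argument for i.i.d.\ exponential random variables, followed by Borel--Cantelli. Recall that the gap lengths $(\hat l^{j,\omega})_{j \in \mathds Z \setminus \{0\}}$ are i.i.d.\ with density $\nu \mathrm{e}^{-\nu \ell}\mathds 1_{(0,\infty)}(\ell)$, and the number $M_N(\omega)$ of Poisson points in $\Lambda_N$ is Poisson distributed with mean $\nu L_N$. A first step is to control $M_N$: by a standard Chernoff bound, $\mathds P(|M_N - \nu L_N| > \tfrac{1}{2}\nu L_N) \leq 2 \mathrm{e}^{-c L_N}$ for some $c>0$, and since $L_N = N/\rho$ these probabilities are summable in $N$, so by Borel--Cantelli there exist $0 < c_1 < c_2$ such that $\mathds P$-almost surely $c_1 L_N \leq M_N(\omega) \leq c_2 L_N$ for all $N$ large enough. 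This reduces the problem to studying $\max_{1 \leq j \leq M_N} \hat l^{j,\omega}$ and absorbing the at most two truncated boundary intervals separately.

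For the upper bound, fix $\alpha > 4$ and apply a union bound: since $\mathds P(\hat l^{j,\omega} > \alpha \nu^{-1} \ln L_N) = \mathrm{e}^{-\alpha \ln L_N} = L_N^{-\alpha}$, conditional on $M_N \leq c_2 L_N$ we have
\begin{equation*}
\mathds P\!\left( \max_{1 \leq j \leq M_N} \hat l^{j,\omega} > \alpha \nu^{-1} \ln L_N \right) \leq c_2 L_N \cdot L_N^{-\alpha} = c_2 L_N^{1-\alpha}.
\end{equation*}
With $L_N = N/\rho$ and $\alpha > 4 > 2$, these probabilities are summable in $N$, so Borel--Cantelli gives the upper bound $\mathds P$-almost surely (and the truncation $l^{j,\omega}_N \leq \hat l^{j,\omega}$ only strengthens it).

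For the lower bound, fix $0 < \varepsilon < 1$ and set $t_N := \nu^{-1}[\ln(L_N) - (1+\varepsilon)\ln(\ln(L_N))]$, so that $\mathrm{e}^{-\nu t_N} = (\ln L_N)^{1+\varepsilon}/L_N$. Restrict attention to the sub-collection of gaps whose closures lie inside a slightly shrunken interval, say $\Lambda_N \setminus [\pm(L_N/2 - 1)]$, so that for such gaps $l^{j,\omega}_N = \hat l^{j,\omega}$; on the event $\{M_N \geq c_1 L_N\}$ the number of such gaps is at least $c_1 L_N - 3$ for $N$ large. Then, using independence,
\begin{equation*}
\mathds P\!\left( \max_j l^{j,\omega}_N \leq t_N,\ M_N \geq c_1 L_N \right) \leq \left(1 - \frac{(\ln L_N)^{1+\varepsilon}}{L_N}\right)^{\!c_1 L_N - 3} \leq \exp\!\left(-\tfrac{c_1}{2}(\ln L_N)^{1+\varepsilon}\right)
\end{equation*}
for $N$ large. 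Since $(\ln L_N)^{1+\varepsilon} \gtrsim (\ln N)^{1+\varepsilon}$, this upper bound equals $N^{-c(\ln N)^{\varepsilon}}$ up to constants, which decays faster than any polynomial in $N$ and is therefore summable. A second application of Borel--Cantelli yields the lower bound almost surely.

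The main technical care lies in handling the truncation of gaps by $\Lambda_N$ (resolved by restricting to interior gaps in the lower bound and using monotonicity in the upper bound) and in checking summability of the tail probabilities along the integer sequence $N \in \mathds N$; the choice $L_N = N/\rho$ makes both verifications routine once the right scales are identified. No delicate second-moment or coupling arguments are required; the only mild subtlety is that the condition $\alpha > 4$ in the statement is quite generous—any $\alpha > 2$ would suffice for Borel--Cantelli along $N \in \mathds N$—so no optimization of constants is needed.
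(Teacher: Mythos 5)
The paper itself does not prove Lemma~\ref{LargestComponent}; it only cites \cite[Theorem B.2]{SeiYngZag12} and \cite[Theorem 5.1.6]{pechmanndiss}. Your argument is the standard extreme-value/Borel--Cantelli proof for i.i.d.\ exponential gap lengths, and it is correct in substance: the tail computations, the choice $\mathrm e^{-\nu t_N}=(\ln L_N)^{1+\varepsilon}/L_N$, the super-polynomial decay $\exp(-c(\ln L_N)^{1+\varepsilon})$, and the summability checks along $L_N=N/\rho$ are all right, as is your observation that $\alpha>2$ already suffices for the upper bound (the constant $4$ is inherited from the cited references and is not needed for this argument).

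Two points deserve tightening. First, the gap containing the origin is excluded from the i.i.d.\ family $(\hat l^{j,\omega})_{j\in\mathds Z\setminus\{0\}}$: its length is the sum of two independent $\mathrm{Exp}(\nu)$ variables, so in the upper bound it needs its own tail estimate, $\mathds P\bigl(|\hat l^{0,\omega}|>\alpha\nu^{-1}\ln L_N\bigr)=(1+\alpha\ln L_N)L_N^{-\alpha}$, which is still summable. Second, writing the union bound ``conditional on $M_N\le c_2L_N$'' over the range $1\le j\le M_N$ sums over a random index set that is correlated with the gap lengths; the clean fix is to control the point counts in $(0,L_N/2)$ and $(-L_N/2,0)$ separately (each Poisson with mean $\nu L_N/2$), so that on the good event the gaps meeting $\Lambda_N$ carry indices in the deterministic set $\{-\lceil c_2L_N\rceil,\dots,\lceil c_2L_N\rceil\}$ and the union bound applies verbatim. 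The analogous remark fixes the lower bound: on the event that each half-line contains at least $K+1$ points with $K=\lfloor c_1L_N/2\rfloor-1$, the gaps $\hat l^{1,\omega},\dots,\hat l^{K,\omega}$ lie entirely in $\Lambda_N$ (so are untruncated), and the event $\{\max_j l_N^{j,\omega}\le t_N\}$ is then contained in $\bigcap_{j=1}^{K}\{\hat l^{j,\omega}\le t_N\}$, whose probability is exactly $(1-\mathrm e^{-\nu t_N})^{K}$; this replaces your slightly informal product bound over ``interior gaps.'' With these routine adjustments the proof is complete.
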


From the construction above we see that the underlying one-particle Hamiltonian in the LS model is rigorously defined as a direct sum of Dirichlet Laplacians over all subintervals $I^{j,\omega}_N$ with lengths $l^{j,\omega}_N > 0$. On an informal level, the non-interacting one-particle Luttinger--Sy Hamiltonian (LS Hamiltonian) is given by $H^0_1(\omega)$ defined on $\mathrm{L}^2(\Lambda_N)$. As a consequence, the spectrum of this self-adjoint operator is purely discrete and the spectrum is the union of all eigenvalues 
\begin{equation}
\frac{\pi^2 n^2}{(l^{j,\omega}_N)^2}
\end{equation}
with $n \in \mathds{N}$ and $j \in \mathds{Z}$ for which $l^{j,\omega}_N > 0$. The ground state energy hence corresponds to the first Dirichlet eigenvalue on the interval with length $l^{(1),\omega}_{N,>}$ and it shall be denoted by $\epsilon_N^{0,\omega}$. Also, writing the eigenvalues in increasing order we write $\epsilon_N^{k,\omega}$ for the $k$-th eigenvalue, $k \in \mathds{N}_0$. The associated eigenstates shall be denoted by $\varphi^{k,\omega}_N \in \mathrm{L}^2(\Lambda_N)$, $k \in \mathds N_0$. We remark that each eigenstate $\varphi^{k,\omega}_N \in \mathrm{L}^2(\Lambda_N)$ has support on one subinterval $I^{j,\omega}_N$ only.

	\section{The case of hard core interactions}\label{SectionHC}
	 In this section we generalize the methods that were developed in \cite{PuleAonghusa87} to our random one-dimen\-sional setting. More explicitly, to the LS-model we add hard core repulsive pair interactions between the particles and investigate their effect on a macroscopic occupation of the one-particle eigenstates $\varphi^{k,\omega}_N \in \mathrm{L}^2(\Lambda_N)$.
	 
	 Introducing pair interactions between the particles, the $N$-particle Hamiltonian ($N \geq 2$) reads as
	\begin{equation}\label{NParticleHamiltonian}
	\begin{split}
	H_{N}(\omega):&=H^0_{N}(\omega)+\sum_{1 \leq i < j \leq N}U_N(x_i-x_j)
	\end{split}
	\end{equation}
	where $U_N:\mathds{R} \rightarrow [0,\infty]$ is a sequence of functions describing the interaction between two particles. Since we consider hard core pair interactions (imagining each particle as a hard one-dimensional ``sphere'') in this section, we define
	\begin{equation}\label{HardCoreIP}
	U_N(x):=
	\begin{cases}
	\infty \quad \text{for} \quad |x| \leq a_N \ ,\\
	0 \quad \text{else}\ ,
	\end{cases}
	\end{equation}
	for some $a_N > 0$ which describes the radius of the ``sphere''. We assume that the sequence $(a_N)_{N \in \mathds{N}}$ is bounded from above but it is allowed to converge to zero at a certain rate as specified below.
 
	The Hamiltonian \eqref{NParticleHamiltonian} with hard core interaction potential \eqref{HardCoreIP} is rigorously defined as suitable Dirichlet Laplacian. Indeed, the (random) $N$-particle configuration space is given by
	\begin{equation}\label{Domain}
	\Lambda^{\omega}_{N,L_N}:=\left\{y \in \Lambda_{N}^N \setminus \bigcup_{j=1}^\infty W(\xjomega{j}): |y_j-y_i| > a_N, i \neq j, i,j=1,...,N \right\}\ ,
	\end{equation}
	where $y=(y_1,y_2,...,y_N)^T \in \mathds{R}^N$ and
	\begin{equation*}
 W(\xjomega{j}):=\left\{y \in \mathds{R}^N:\ \exists i \in (1,2,...,N) \ \text{such that}\  y_i=\xjomega{j} \right\}
	\end{equation*}
	is a hyperplane associated with the $j$-th random point $\xjomega{j} \in \mathds{R}$. As a consequence, the Hamiltonian \eqref{NParticleHamiltonian} is rigorously defined as a Dirichlet Laplacian on $\mathrm{L}^2_{sym}(\Lambda^{\omega}_{N,L_N})$, the subscript indicating that we work on the fully symmetric subspace of $\mathrm{L}^2(\Lambda^{\omega}_{N,L_N})$ (recall that we are dealing with bosons). We note that this operator, being a Dirichlet Laplacian on a bounded domain, has compact resolvent and hence purely discrete spectrum only. 
	
By construction, it is clear that the particle density $\rho$ cannot --- in the case of hard core interactions --- attain any real value since each particle occupies a space of ``volume'' $2a_N > 0$. In other words, we may introduce the critical particle density
\begin{equation}\label{CriticalDensity}
\rho_{crit}:=\frac{1}{2\|a_N\|_{\infty}}
\end{equation}
and restrict attention to values $\rho < \rho_{crit}$ in the rest of this section.

	Since we study BEC in the canonical ensemble at inverse temperature $\beta \in (0,\infty)$, the $N$-particle state of system (meaning the density matrix) is given by
	\begin{equation*}
	\DensityMatrixNbeta=\frac{\mathrm{e}^{-\beta H_{N}(\omega)}}{\tr(\mathrm{e}^{-\beta H_{N}(\omega)})}\ ,
	\end{equation*}
	where $\tr(\cdot)$ refers to the trace of a (trace-class) operator on the Hilbert space $\mathrm{L}^2_{sym}(\Lambda^{\omega}_{N,L_N})$. Let $\DensityMatrixNbeta(\cdot,\cdot)$ denote the kernel of $\DensityMatrixNbeta$. In order to calculate the density of particles in a given one-particle state one makes use of the so-called reduced one-particle density matrix which acts as a trace-class operator on the underlying one-particle Hilbert space $\mathrm{L}^2(\Lambda_N)$ \cite{M07}. The kernel of the corresponding reduced one-particle density matrix is then obtained as 
	\begin{equation}
	\DensityMatrixNbetaeins(x,y)=N \int \ud z_1 ... \int \ud z_{N-1}\ \DensityMatrixNbeta(x,z_1,...,z_{N-1},y,z_1,...,z_{N-1})\ 
	\end{equation}
	with $x,y \in \Lambda_{N} \setminus (\xjomega{j})_{j \in \mathds{Z}}$. Consequently, the average particle density in a one-particle state $\varphi \in \mathrm{L}^2(\Lambda_{N})$ can be calculated as
	\begin{equation}
	\AverageParticleDensityNbeta(\varphi) := \frac{1}{L_N}\int_{\Lambda_N}\ud x \int_{\Lambda_N}\ud y\ \overline{\varphi(x)}\DensityMatrixNbetaeins(x,y) \varphi(y)\ ,
	\end{equation} 
see \cite{PuleAonghusa87,M07}. This leads to the following definition. 
\begin{defn}\label{DefMacI} Let $(\varphi_{N})_{N \in \mathds N}$, $\varphi_N \in \mathrm{L}^2(\Lambda_{N})$ for all $N \in \mathds N$, be a sequence of normalized one-particle states. We call $(\varphi_{N})_{N \in \mathds N}$ $\mathds{P}$-almost surely macroscopically occupied at inverse temperature $\beta \in (0,\infty)$ iff
	\begin{equation}
	\limsup_{N \rightarrow \infty} \AverageParticleDensityNbeta(\varphi_N) > 0
	\end{equation}
	holds for $\mathds P$-almost all $\omega \in \Omega$.
\end{defn}
Now, following \cite{PuleAonghusa87} we decompose $\mathds{R}$ into a countable number of intervals. Namely, for $n \in \mathds{Z}$ we introduce
	\begin{equation*}
	\Lambda^{(n)}_N:=\{x \in \mathds{R}: a_Nn \leq x \leq a_N(n+1) \}  \ .
	\end{equation*}
Since every particle is modeled as a hard ``sphere'' of radius $a_N > 0$, at most one particle can occupy the box $\Lambda^{(n)}_N$. 

	%
	\begin{lemma}[\cite{PuleAonghusa87}, Lemma~2]\label{LemmaAP} For any $N \in \mathds{N}$, let $\varphi \in \mathrm{L}^2(\Lambda_{N})$ be a normalized one-particle state extended by zero to all of $\mathds{R}$. Then, $\mathds{P}$-almost surely and for all $\beta \in (0,\infty)$ one has 
		\begin{equation}
		\AverageParticleDensityNbeta(\varphi) \leq \frac{1}{L_N}\left(\sum_{n \in \mathds{Z}} \left(\int_{\Lambda^{(n)}_N}|\varphi(x)|^2\ \ud x\right)^{1/2} \right)^2\ .
		\end{equation}
	\end{lemma}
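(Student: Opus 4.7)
The strategy, going back to Pulé and Aonghusa, is to decompose $\varphi$ along the hard-core grid $\{\Lambda_N^{(n)}\}_{n \in \mathds Z}$, apply a Cauchy-Schwarz estimate against the positive semi-definite form $\langle \cdot, \DensityMatrixNbetaeins \cdot\rangle$ to reduce matters to the diagonal, and then exploit the exclusion $\hat N_n \leq \eins$ imposed by the hard cores on each box. Concretely, set $\varphi_n := \chi_{\Lambda_N^{(n)}} \varphi$. These restrictions are pairwise orthogonal (the boxes overlap only at isolated points) with $\|\varphi_n\|^2 = \int_{\Lambda_N^{(n)}} |\varphi|^2 \, \ud x$, and $\varphi = \sum_n \varphi_n$ in $\mathrm{L}^2(\mathds R)$. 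Since $\DensityMatrixNbetaeins$ is positive semi-definite, Cauchy-Schwarz for its sesquilinear form gives
\[
L_N\, \AverageParticleDensityNbeta(\varphi) = \langle \varphi, \DensityMatrixNbetaeins \varphi\rangle = \sum_{n,m \in \mathds Z} \langle \varphi_n, \DensityMatrixNbetaeins \varphi_m\rangle \leq \Bigl(\sum_{n \in \mathds Z} \langle \varphi_n, \DensityMatrixNbetaeins \varphi_n\rangle^{1/2}\Bigr)^{2},
\]
so that everything reduces to the diagonal bound $\langle \varphi_n, \DensityMatrixNbetaeins \varphi_n\rangle \leq \|\varphi_n\|^2$.

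For this last step I would use the standard identity $\DensityMatrixNbetaeins = N\, \mathrm{tr}_{2,\ldots,N}(\DensityMatrixNbeta)$ together with bosonic symmetry to rewrite $\langle \varphi_n, \DensityMatrixNbetaeins \varphi_n\rangle = \bigl\langle \sum_{i=1}^{N} (|\varphi_n\rangle\langle \varphi_n|)_{i}\bigr\rangle_{\DensityMatrixNbeta}$, the subscript $i$ indicating action on the $i$-th particle coordinate. Because $\varphi_n$ is supported in $\Lambda_N^{(n)}$, the one-particle Cauchy-Schwarz bound $|\langle \varphi_n, g\rangle|^2 \leq \|\varphi_n\|^2 \int_{\Lambda_N^{(n)}} |g|^2\, \ud x$ promotes to the operator inequality $|\varphi_n\rangle\langle \varphi_n| \leq \|\varphi_n\|^2\, \chi_{\Lambda_N^{(n)}}$ on the one-particle Hilbert space, so that $\sum_{i=1}^{N} (|\varphi_n\rangle\langle \varphi_n|)_{i} \leq \|\varphi_n\|^2\, \hat N_n$ with $\hat N_n := \sum_{i=1}^{N} \chi_{\Lambda_N^{(n)}}(x_i)$. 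Here the hard core enters decisively: any two points inside a closed box of length $a_N$ are at distance at most $a_N$, so the strict inequality $|y_i - y_j| > a_N$ defining $\Lambda^{\omega}_{N,L_N}$ forbids more than one particle per box. Hence $\hat N_n \leq \eins$ almost everywhere on $\Lambda^{\omega}_{N,L_N}$, and taking expectation in $\DensityMatrixNbeta$ yields $\langle \varphi_n, \DensityMatrixNbetaeins \varphi_n\rangle \leq \|\varphi_n\|^2$. Substituting back into the square and dividing by $L_N$ proves the claim; since the argument is deterministic once a typical $\omega$ is fixed, the estimate holds in particular $\mathds P$-almost surely.

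The only non-routine ingredient is the hard-core exclusion $\hat N_n \leq \eins$, and this is where I expect the care to be needed: one must verify that the strict inequality in the definition of $\Lambda^{\omega}_{N,L_N}$ really rules out two particles sitting at opposite endpoints of the same closed box, but this boundary configuration only occupies a Lebesgue-null set and hence does not affect the operator inequality on $\mathrm{L}^2_{\sym}(\Lambda^{\omega}_{N,L_N})$. All other steps—the grid decomposition, the Cauchy-Schwarz for the positive form, and the promotion of a vectorial Cauchy-Schwarz to an operator bound—are standard Hilbert-space manipulations.
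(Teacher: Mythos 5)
Your proof is correct and is essentially the argument of Pul\'e--Aonghusa that the paper cites rather than reproduces: the decomposition $\varphi=\sum_n\chi_{\Lambda_N^{(n)}}\varphi$, the Cauchy--Schwarz inequality for the positive form of $\DensityMatrixNbetaeins$, and the hard-core exclusion bound $\hat N_n\leq \eins$ (which is exactly the ``at most one particle per box'' observation the paper states just before the lemma) are the intended ingredients. The boundary worry you raise is in fact vacuous: two points of the closed box $\Lambda_N^{(n)}$ are at distance at most $a_N$, so the strict inequality $|y_i-y_j|>a_N$ already excludes every such configuration from $\Lambda^{\omega}_{N,L_N}$, with no null-set argument needed.
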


Using Lemma~\ref{LargestComponent} we can now prove a first main result, i.e., we prove that $\mathds P$-almost surely none of the eigenstates of the non-interacting one-particle LS Hamiltonian are macroscopically occupied in the presence of sufficiently extended hard core interactions.
\begin{theorem}\label{MainResultBEC}\label{ProofEigenstates} For arbitrary $k \in \mathds N_0$, let $\varphi^{k,\omega}_N \in \mathrm{L}^2(\Lambda_{N})$ denote the normalized $k$-th eigenstate of the one-particle LS Hamiltonian defined on $L^2(\Lambda_N)$. Then, for all $\beta \in (0,\infty)$ and assuming that the sequence of radii $(a_N)_{N \in \mathds{N}}$ is such that
	\begin{equation*}
	\frac{\ln^2(N)}{a^2_N N} \longrightarrow 0\ , \quad \text{as}\quad N \rightarrow \infty \ ,
	\end{equation*}
	one has
	\begin{equation}
	\lim_{N \rightarrow \infty} \AverageParticleDensityNbeta(\varphi^{k,\omega}_N)=0 
	\end{equation}
	for $\mathds{P}$-almost all $\omega \in \Omega$. In other words, the sequence $(\varphi^{k,\omega}_N)_{N \in \mathds N}$ is $\mathds P$-almost surely not macroscopically occupied.
\end{theorem}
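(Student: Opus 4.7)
The plan is to combine three ingredients: every eigenstate $\varphi^{k,\omega}_N$ of the non-interacting LS Hamiltonian is supported on a single Dirichlet subinterval $I^{j,\omega}_N$ of length $l^{j,\omega}_N \leq \llargest$; the hard-core density bound of Lemma~\ref{LemmaAP} turns the occupation $\AverageParticleDensityNbeta(\varphi^{k,\omega}_N)$ into a geometric sum over boxes of length $a_N$; and Lemma~\ref{LargestComponent} controls $\llargest$ almost surely by a multiple of $\ln(L_N)$.

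Concretely, after extending $\varphi^{k,\omega}_N$ by zero to all of $\mathds R$, Lemma~\ref{LemmaAP} yields
\begin{equation*}
\AverageParticleDensityNbeta(\varphi^{k,\omega}_N) \leq \frac{1}{L_N}\left(\sum_{n \in \mathds Z}\left(\int_{\Lambda^{(n)}_N}|\varphi^{k,\omega}_N(x)|^2\,\ud x\right)^{1/2}\right)^2.
\end{equation*}
Only those boxes $\Lambda^{(n)}_N$ that intersect $\supp \varphi^{k,\omega}_N \subset I^{j,\omega}_N$ can contribute, and their number is at most $M_N := \lceil l^{j,\omega}_N/a_N\rceil+1 \leq \llargest/a_N + 2$. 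Bounding each square-root factor trivially by $\|\varphi^{k,\omega}_N\|_2 = 1$, one obtains
\begin{equation*}
\AverageParticleDensityNbeta(\varphi^{k,\omega}_N) \leq \frac{M_N^2}{L_N} \leq \frac{1}{L_N}\left(\frac{\llargest}{a_N}+2\right)^2.
\end{equation*}
Now I fix some $\alpha>4$ and invoke Lemma~\ref{LargestComponent}: for $\mathds P$-almost every $\omega \in \Omega$ and all sufficiently large $N$ one has $\llargest \leq \alpha\nu^{-1}\ln(L_N)$. Substituting this and using $L_N = N/\rho$, the right-hand side is of order $\ln^2(N)/(a_N^2 N)$, which vanishes by the hypothesis on $(a_N)_{N\in\mathds N}$.

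The argument is essentially bookkeeping: the model-specific input enters only through the support property of LS-eigenstates (which turns a sum over all of $\mathds Z$ into a count of $O(\ln(N)/a_N)$ boxes) and through the almost-sure upper bound for $\llargest$. I do not expect any genuine obstacle. The two points that merit some care are (i) choosing a single $\mathds P$-null set of bad $\omega$'s coming from Lemma~\ref{LargestComponent} that works uniformly in $N$ (and hence independently of which interval carries the $k$-th eigenstate), and (ii) noting that bounding each box term crudely by $1$, rather than applying Cauchy--Schwarz on the sum of square roots, is precisely what produces the quadratic factor $\ln^2(N)/(a_N^2 N)$ appearing in the stated hypothesis.
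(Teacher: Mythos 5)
Your proposal is correct and follows essentially the same route as the paper: apply Lemma~\ref{LemmaAP}, bound each box contribution by $\|\varphi^{k,\omega}_N\|_2=1$, count the boxes meeting the single supporting subinterval via Lemma~\ref{LargestComponent}, and conclude from the hypothesis $\ln^2(N)/(a_N^2N)\to 0$. The only (harmless) difference is that you track the additive $+2$ in the box count explicitly, whereas the paper absorbs it into the ``for $N$ large enough'' statement.
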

\begin{proof}
    Recall from Section~\ref{SectionBasics} that any eigenstate $\varphi^{k,\omega}_N$ has support on exactly one subinterval $I^{j,\omega}_N$.
    With Lemma~\ref{LemmaAP} and Lemma~\ref{LargestComponent} we obtain
	\begin{equation*}\begin{split}
		\AverageParticleDensityNbeta(\varphi^{k,\omega}_N) &\leq \frac{1}{L_N}\left(\sum_{n \in \mathds{Z}}\left(\int_{\Lambda^{(n)}_N}|\varphi^{k,\omega}_N(x)|^2\ \ud x\right)^{1/2}\right)^2 \\
&\leq \frac{1}{L_N}\left(\sum_{n \in \mathds{Z}: \supp(\varphi^{k,\omega}_N)\cap \Lambda^{(n)}_N \neq \emptyset}1\right)^2 \\
& \leq \frac{\alpha^2 \nu^{-2}\ln^2(L_N)}{a^2_NL_N} 
		\end{split}
	\end{equation*}
	for $N$ large enough and with some $\alpha > 4$. The statement now follows readily.
\end{proof}
\begin{remark}\label{RemarkLocalI}
	Under the assumptions of Theorem~\ref{ProofEigenstates} the following holds: From the proof of Theorem~\ref{ProofEigenstates} one concludes that any sequence of normalized one-particle states $(\varphi_N)_{N \in \mathds N}$, $\varphi_N \in \mathrm{L}^2(\Lambda_{N})$ for each $N \in \mathds N$, cannot be macroscopically occupied given that \linebreak $\lim_{N \to \infty} S_N^2/N =0$ holds where $S_N$ denotes the number of boxes $\Lambda^{(n)}_N$ on which $\varphi_N$ is supported.
	
	 Alternatively, a sequence of normalized one-particle states $(\varphi_{N})_{N \in \mathds N}$, $\varphi_{N} \in \mathrm{L}^2(\Lambda_{N})$ for each $N \in \mathds N$, is not macroscopically occupied if there exists a sequence $(x_N)_{N \in \mathds{R}} \subset \mathds{R}$ such that
	\begin{equation*}
	|\varphi_{N}(x)| \leq \frac{C}{|x-x_N|^{1+\varepsilon}}\ 
	\end{equation*}
	holds for some constants $0 <\varepsilon \leq 1/2$, $C > 0$ (independent of $N$), all $N \in \mathds{N}$ and almost all $x \in \mathds{R}$ with $|x-x_N| > R$ and $R$ independent of $N$.
	
\end{remark}
	\section{The case of general repulsive interactions}\label{SectionGeneral}
	This section is based on the methods developed first in \cite{PS86} and generalized later in \cite{BolteKernerInstability}. Although strictly speaking not necessary, it is convenient to utilize methods of second quantization in this section, see e.g. \cite{MR04} for an introduction.

	In contrast to the previous section we allow for more general and less repulsive pair interactions in the sequel. Namely, in \eqref{NParticleHamiltonian} we now allow for a sequence of potentials $U_N(\cdot) \in (\mathrm{L}^{\infty}\cap \mathrm{L}^1)(\mathds{R})$, $ 0 \leq U_N(x) < \infty$, $\|U_N\|_{\mathrm{L}^1(\mathds{R})}$ uniformly bounded from above, and that fulfills the following property: For all $N \in \mathds N$,
	\begin{equation}\label{weakCond}
\exists A_N,b_N > 0: \forall x \in [-A_N,+A_N]\ \text{one has}\ U_N(x) \geq b_N \ .
	\end{equation}
The condition~\eqref{weakCond} can be interpreted as a weak version of the hard core interactions considered in the previous section which correspond to the informal choice $A_N:=a_N$ and $b_N:=\infty$. As will become clear later, the sequence $(A_N)_{N \in \mathds{N}}$ shall be bounded from above but it will be allowed to converge to zero at some specified rate. As for the sequence $(b_N)_{N \in \mathds{N}}$ we will allow for two different cases, one in which this sequence converges to zero and one in which it is bounded away from zero.

The $N$-particle Hilbert space is given as $\mathrm{L}^2_{sym}(\widetilde{\Lambda}^{\omega}_{N,L_N})$ with
\begin{equation}\label{Domain2}
\widetilde{\Lambda}^{\omega}_{N,L_N}:=\left\{y \in \Lambda_{N}^N \setminus \bigcup_{j \in \mathds{Z}} W(\xjomega{j}) \right\}
\end{equation}
and $W(\xjomega{j})$ as defined below \eqref{Domain}. For a suitable $N$-particle operator $\mathcal O_N$ on the Hilbert space $\mathrm{L}^2_{sym}(\widetilde{\Lambda}^{\omega}_{N,L_N})$, its expectation value in the canonical ensemble is given by
\begin{equation}
\EnsembleExpectation{\mathcal O_N} :=\tr(\mathcal O_N \DensityMatrixNbeta)\ .
\end{equation}
In a first step we now prove that the energy density in the interacting system remains bounded in the thermodynamic limit. 
\begin{lemma}\label{EnergyDensity} Consider the $N$-particle Hamiltonian~\eqref{NParticleHamiltonian} on the Hilbert space $\mathrm{L}^2_{sym}(\widetilde{\Lambda}^{\omega}_{N,L_N})$ with an interaction potential $U_N$ as defined above. Then, for all $\beta \in (0,\infty)$ and $\mathds{P}$-almost surely one has
	\begin{equation*}
	\limsup_{N \rightarrow \infty} \frac{\EnsembleExpectation{H_N(\omega)}}{L_N} < \infty\ .
	\end{equation*}
\end{lemma}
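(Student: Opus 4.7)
The plan is to turn a free-energy lower bound into an upper bound on the expected energy by exploiting convexity of $\beta \mapsto \ln Z_N(\beta)$, where $Z_N(\beta):=\tr(\mathrm{e}^{-\beta H_N(\omega)})$. Since $(\ln Z_N)'(\beta)=-\langle H_N(\omega)\rangle_{\varrho_N^{\beta,\omega}}$ and $(\ln Z_N)''(\beta)=\mathrm{Var}_{\varrho_N^{\beta,\omega}}(H_N(\omega))\geq 0$, the tangent-line inequality at $\beta$ evaluated at any fixed $\beta_0\in(0,\beta)$ reads
\[
\langle H_N(\omega)\rangle_{\varrho_N^{\beta,\omega}}\;\leq\;\frac{\ln Z_N(\beta_0)-\ln Z_N(\beta)}{\beta-\beta_0}.
\]
It therefore suffices to establish $\mathds{P}$-almost surely that $\ln Z_N(\beta_0)\leq C_1 L_N$ and $\ln Z_N(\beta)\geq -C_2 L_N$.

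For the upper bound on $\ln Z_N(\beta_0)$ I would use $U_N \geq 0$, which yields $H_N(\omega)\geq H_N^0(\omega)$ and hence $Z_N(\beta_0)\leq Z_N^0(\beta_0):=\tr(\mathrm{e}^{-\beta_0 H_N^0(\omega)})$ by applying min-max to the $N$-body eigenvalues. I would then dominate the non-interacting bosonic partition function by a grand-canonical one at some fixed $\mu<-(\ln 2)/\beta_0$,
\[
Z_N^0(\beta_0)\leq \mathrm{e}^{-\beta_0\mu N}\prod_{k}\bigl(1-\mathrm{e}^{-\beta_0(\epsilon_N^{k,\omega}-\mu)}\bigr)^{-1},
\]
use $-\ln(1-x)\leq 2x$ for $x\leq 1/2$, and invoke the deterministic Weyl-type estimate $\sum_{j}\sum_{n\geq 1}\mathrm{e}^{-\beta_0\pi^2 n^2/(\hat l^{j,\omega}_N)^2}\leq L_N/(2\sqrt{\pi\beta_0})$, obtained by Gaussian comparison on each subinterval and the fact that the subinterval lengths sum to $L_N$. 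This gives $\ln Z_N(\beta_0)\leq C_1 L_N$ with a deterministic constant $C_1$.

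For the lower bound on $\ln Z_N(\beta)$ I would use the Gibbs variational principle with a pure trial state $\psi_N=\phi_N^{\otimes N}\in \mathrm{L}^2_{sym}(\widetilde\Lambda^{\omega}_{N,L_N})$, where $\phi_N$ is a normalized superposition of the one-particle Dirichlet ground states $\varphi_j(x)=\sqrt{2/l^{j,\omega}_N}\sin(\pi(x-x_j^\omega)/l^{j,\omega}_N)$ over those subintervals of length at least some fixed $l_0>0$, taken with weights $\sqrt{l^{j,\omega}_N/L_N^{\mathrm{good}}}$, where $L_N^{\mathrm{good}}:=\sum_{j:l^{j,\omega}_N\geq l_0}l^{j,\omega}_N$. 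By the strong law of large numbers applied to the marks of the Poisson process, $L_N^{\mathrm{good}}/L_N\to (1+\nu l_0)\mathrm{e}^{-\nu l_0}$ $\mathds{P}$-almost surely, so $\|\phi_N\|_\infty^2\leq 2/L_N^{\mathrm{good}}=O(1/L_N)$. A direct computation then yields kinetic energy $\leq \pi^2 N/l_0$ and interaction energy $\leq \binom{N}{2}\|\phi_N\|_\infty^2\|U_N\|_{\mathrm{L}^1(\mathds{R})}=O(L_N)$, using that $\|U_N\|_{\mathrm{L}^1}$ is uniformly bounded in $N$. Hence $\langle\psi_N|H_N(\omega)|\psi_N\rangle=O(L_N)$, and Gibbs' principle gives $-\ln Z_N(\beta)\leq \beta\langle\psi_N|H_N(\omega)|\psi_N\rangle\leq C_2 L_N$.

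Substituting both estimates into the convexity inequality proves the lemma. The main obstacle I expect is the deterministic Weyl-plus-grand-canonical control of $Z_N^0(\beta_0)$: while each individual step is standard, one must be careful that the constants do not blow up because of atypically small subintervals $\hat l^{j,\omega}$, which is why only a Gaussian upper bound (uniform in the lengths) enters at that point; similarly, the trial state must be built from subinterval ground states rather than, say, uniform distributions in order to live in the random Hilbert space $\mathrm{L}^2_{sym}(\widetilde\Lambda^{\omega}_{N,L_N})$.
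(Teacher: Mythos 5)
Your proposal is correct and follows essentially the same route as the paper: convexity of $\tilde\beta\mapsto\ln\tr(\mathrm{e}^{-\tilde\beta H_N(\omega)})$, an upper bound at a smaller inverse temperature by dropping $U_N\geq 0$ and comparing with a free Bose gas, and a lower bound via Jensen's inequality with a product trial state supported on sufficiently long subintervals. The only differences are implementational --- you control the free-gas partition function by a self-contained grand-canonical/Gaussian estimate where the paper cites Ruelle's theorem for the Dirichlet box, and your one-particle trial function is a length-weighted superposition of subinterval ground states where the paper uses a smooth plateau cutoff together with the counting Lemma~\ref{Lemma Anzahl Intervalle mit Mindestlaenge 3}; both choices yield the required $O(L_N)$ bounds.
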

\begin{proof}
  Let $\beta \in (0,\infty)$ and $N \in \mathds N$ (and a typical $\omega \in \Omega$ from a set of full measure) be given. Note that
  \begin{equation*}
   \EnsembleExpectation{H_N(\omega)}= - \left( \dfrac{ \mathrm{d}}{\mathrm{d} \tilde \beta} \mathscr Z_N(\tilde \beta) \right)_{\tilde \beta = \beta}
  \end{equation*}
  where
  \begin{equation*}
    \mathscr Z_N : (0,\infty) \to \mathds R, \ \tilde \beta \mapsto \mathscr Z_N(\tilde \beta) := \ln(\tr(\mathrm{e}^{-\tilde \beta H_N(\omega)})) \ .
  \end{equation*}
   Since $\mathscr Z_N$ is a convex function,
  \begin{equation*}
   \EnsembleExpectation{H_N(\omega)} \le \dfrac{\mathscr Z_N(\beta/2) - \mathscr Z_N(\beta)}{\beta/2} \ .
  \end{equation*}
On the one hand, with $(E_N^{j})_{j \in \mathds N_0}$ being the eigenvalues of the Dirichlet Laplacian on $\Lambda_N^N$, arranged in increasing order and repeated according to their multiplicity, we have
\begin{equation*}
 \mathscr Z_N(\beta/2) \le \ln\left(\sum\limits_{j \in \mathds N_0} \mathrm{e}^{-(\beta/2) E_N^{j}}\right) 
\end{equation*}
where the right hand side, after dividing by $L_N$, converges to a constant $C_{1}(\beta/2) > 0$ in the limit $N \to \infty$, see, e.g., \cite[Theorem~3.5.8]{ruelle1999statistical}. 

To give a lower bound, we define the trial function
$$\Psi_N^{\omega}(x_1,\ldots,x_N) := \psi_N^{\omega}(x_1) \ldots \psi_N^{\omega}(x_N) $$
where $\psi_N^{\omega}(x) = \sum_{j \in \mathds Z} \xi_N^{j,\omega}(x) \mathds 1_{l_N^{j,\omega}\ge 3}(x)$.
Here, for all $j \in \mathds Z$ such that $l_N^{j,\omega} \ge 3$, we define a function $\xi_N^{j,\omega} : I_N^{j,\omega} \to \mathds R$ on the associated subinterval $I_N^{j,\omega}$ as
\begin{equation*}
 x \mapsto \xi_N^{j,\omega}(x) := \begin{cases}
                                                                                \dfrac{\mathrm{e}^{-(x-\xjomega{j})^{-1}}}{\mathrm{e}^{-(x - \xjomega{j})^{-1}} + \mathrm{e}^{-(x - \xjomega{j} - 1)^{-1}}} & \text{ if } x \in (\xjomega{j}, \xjomega{j} + 1) \vspace{0.2in}\ ,\\
                                                                                1 & \text{ if } x \in [\xjomega{j}+1 , \hat x^{j+1,\omega} -1] \vspace{0.2in}\ , \\
                                                                                \dfrac{\mathrm{e}^{-(x-\hat x^{j + 1,\omega})^{-1}}}{\mathrm{e}^{-(x - \hat x^{j + 1,\omega})^{-1}} + \mathrm{e}^{-(x - \hat x^{j + 1,\omega} + 1)^{-1}}} & \text{ if } x \in (\hat x^{j+1,\omega} - 1, \hat x^{j+1,\omega})\ .
                                                                               \end{cases} 
\end{equation*}
Note that $\|\Psi_N^{\omega}\|_2^{-1} \Psi_N^{\omega} \in \mathcal D(H_N(\omega))$, where $\mathcal{D}(H_N(\omega))$ is the domain of the self-adjoint operator $H_N(\omega)$. We obtain
\begin{align*}
 \mathscr Z_N(\beta) &\ge \ln \left( \mathrm{e}^{-\beta \| \Psi_N^{\omega}\|_2^{-2} \langle \Psi_N^{\omega}, H_N(\omega) \Psi_N^{\omega} \rangle} \right) = -\beta \| \Psi_N^{\omega}\|_2^{-2} \langle \Psi_N^{\omega}, H_N(\omega) \Psi_N^{\omega} \rangle
\end{align*}
due to the convexity of the exponential function and the Jensen's inequality, see also \cite[Lemma 14.1 and Remark 14.2]{LiebSeiringer}, as well as the monotonicity of the natural logarithm.

Employing Lemma~\ref{Lemma Anzahl Intervalle mit Mindestlaenge 3} and the fact that $\|\psi_N^{\omega}\|_2^2 \ge \#\{j \in \mathds Z: l_N^{j,\omega} \ge 3\}$ (here, $\#A$ denotes the number of elements in a set $A \subset \mathds N_0$), we obtain
\begin{align*}
 & \|\Psi_N^{\omega}\|_2^{-2} \langle \Psi_N^{\omega}, H_N(\omega) \Psi_N^{\omega} \rangle \\
 & \quad = \, N \|\psi_N^{\omega}\|_2^{-2} \int\limits_{\Lambda_N} |(\psi_N^{\omega})'(x)|^2 \, \mathrm{d} x \\
 & \qquad \qquad + \, \dfrac{N(N-1)}{2} \|\psi_N^{\omega}\|_2^{-4} \int\limits_{\Lambda_N} \int\limits_{\Lambda_N} U_N(x-y) |\psi_N^{\omega}(x)|^2 |\psi_N^{\omega}(y)|^2 \, \mathrm{d} x \, \mathrm{d} y \\
 & \quad \le \, N (const.) + \dfrac{N(N-1)}{2} \dfrac{1}{(\#\{j \in \mathds Z: l_N^{j,\omega} \ge 3\})^2} L_N \|U_N\|_{\mathrm{L}^1(\mathds R)} \\
 & \quad \le \, N C_2
\end{align*}
for a constant $C_2>0$ independent of $N$. 

Thus, in summary we get
\begin{equation*}
 \limsup\limits_{N \to \infty} \dfrac{\EnsembleExpectation{H_N(\omega)}}{N}  \le \dfrac{2}{\beta} \left( C_1 \left( \dfrac{\beta}{2} \right) + \beta C_2 \right) < \infty \ .
\end{equation*}
\end{proof}
\begin{remark} It is interesting to remark that questions related to the convergence of the energy density in the interacting LS model where recently studied in \cite{Veniaminov,VeniaminovKlopp}.
\end{remark}
In order to prove the main theorem of this section, we employ methods from second quantization. For a given one-particle state $\varphi \in \mathrm{L}^2(\Lambda_N)$, we write $a^{\ast}(\varphi)$ for the corresponding creation and $a(\varphi)$ for the corresponding annihilation operator. As usual, they can be expressed in terms of the distribution valued operators $a^{\ast}(x)$ and $a(x)$ which fulfill the CCR
\begin{equation}\begin{split}
\left[a(x),a^{\ast}(y)\right]&=\delta(x-y)\ , \\
\left[a(x),a(y)\right]&=\left[a^{\ast}(x),a^{\ast}(y)\right]=0 \ ,
\end{split}
\end{equation}
where $x,y \in \Lambda_N \setminus \{\xjomega{j}: j \in \mathds{Z}\}$. One has
\begin{equation*}\begin{split}
a^{\ast}(\varphi)&=\int_{-L_N/2}^{+L_N/2} \varphi(x)a^{\ast}(x)\ \ud x 
\end{split}
\end{equation*}
as well as
\begin{equation*}\begin{split}
a(\varphi)&=\int_{-L_N/2}^{+L_N/2}\overline{\varphi(x)}a(x)\ \ud x \ .
\end{split}
\end{equation*}
To a given state $\varphi \in \mathrm{L}^2(\Lambda_N)$ one associates the number operator $a^{\ast}(\varphi)a(\varphi)$. 
\begin{remark}\label{DefMacII} We note that Definition~\ref{DefMacI} is translated to the present setting as follows: Let $\varphi_{N} \in \mathrm{L}^2(\Lambda_N)$ be a normalized one-particle state for all $N \in \mathds N$. We say that $(\varphi_N)_{N \in \mathds N}$ is $\mathds{P}$-almost surely macroscopically occupied at inverse temperature $\beta \in (0,\infty)$ iff
	\begin{equation}
	\limsup_{N \rightarrow \infty}\frac{\EnsembleExpectation{a^\ast(\varphi_{N})a(\varphi_{N})}}{L_N} > 0 
	\end{equation}
	holds for $\mathds P$-almost all $\omega \in \Omega$.
\end{remark}

We are now in position to prove the main result of this section, namely, that $\mathds P$-almost surely all eigenstates of the LS Hamiltonian on $L^2(\Lambda_N)$ are not macroscopically occupied given the two-particle interactions are sufficiently strong.
\begin{theorem}\label{CondensateXXX} For arbitrary $k \in \mathds N_0$, let $\varphi^{k,\omega}_{N} \in \mathrm{L}^2(\Lambda_{N})$ denote the normalized $k$-th eigenstate of the one-particle LS Hamiltonian defined on $L^2(\Lambda_N)$ for all $N \in \mathds{N}$. Then, for all $\beta \in (0,\infty)$, assuming that either the bounded sequence $(A_N)_{N \in \mathds{N}}$ is such that 
	\begin{equation*}
	\frac{A^3_N N}{\ln^3(N)} \longrightarrow \infty\ ,\quad \text{as}\ N \longrightarrow \infty \ ,
	\end{equation*}
and $(b_N)_{N \in \mathds{N}}$ is bounded away from zero, or that $(b_N)_{N \in \mathds{N}}$ converges to zero and
\begin{equation*}
\frac{b_N A^3_N N}{\ln^3(N)} \longrightarrow \infty\ ,\quad \text{as}\ N \longrightarrow \infty\ ,
\end{equation*}
we have that $(\varphi^{k,\omega}_{N})_{N \in \mathds N}$ is $\mathds{P}$-almost surely not macroscopically occupied, i.e.,
	\begin{equation}
	\limsup_{N \rightarrow \infty}\frac{\EnsembleExpectation{a^\ast(\varphi^{k,\omega}_{N})a(\varphi^{k,\omega}_{N}}}{L_N}=0
	\end{equation} %
	for $\mathds P$-almost all $\omega \in \Omega$.
\end{theorem}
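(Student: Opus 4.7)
Write $\varphi := \varphi_N^{k,\omega}$ for brevity and let $W_N(\omega) := \sum_{1 \le i < j \le N} U_N(x_i - x_j)$ denote the full pair-interaction operator. The plan is to sandwich the expectation $\EnsembleExpectation{a^\ast(\varphi)a(\varphi)}$ between an upper bound on $\EnsembleExpectation{W_N(\omega)}$ coming from the bounded energy density (Lemma~\ref{EnergyDensity}) and a lower bound on $\EnsembleExpectation{W_N(\omega)}$ which is quadratic in $\EnsembleExpectation{a^\ast(\varphi)a(\varphi)}$. Since $H_N^0(\omega) \ge 0$ as an operator, Lemma~\ref{EnergyDensity} immediately gives $\mathds{P}$-almost surely
\[
\EnsembleExpectation{W_N(\omega)} \;\le\; \EnsembleExpectation{H_N(\omega)} \;\le\; C\,L_N
\]
for some constant $C = C(\beta,\rho) > 0$ and all $N$ large enough.

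The core step is to establish a lower bound of the schematic form
\[
\EnsembleExpectation{W_N(\omega)} \;\ge\; c\,\frac{b_N\,A_N^3}{(l_N^{j,\omega})^3}\,\EnsembleExpectation{a^\ast(\varphi)a(\varphi)}^2,
\]
where $l_N^{j,\omega}$ is the length of the unique subinterval $I_N^{j,\omega}$ on which the eigenstate $\varphi$ is supported (recall from Section~\ref{SectionBasics} that every eigenstate of the one-particle LS Hamiltonian lives on exactly one such subinterval). The strategy is to work throughout in second quantization, exploit the pointwise lower bound $U_N(x-y) \ge b_N\,\mathds{1}_{[-A_N,A_N]}(x-y)$, rewrite $W_N(\omega)$ in terms of the density $a^\ast(x)a(x)$ via the CCR, and then partition $I_N^{j,\omega}$ into cells of length comparable to $A_N$. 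A Cauchy--Schwarz / Jensen argument on the cell occupation numbers, combined with the concentration property $\|\varphi\|_\infty^2 \lesssim (l_N^{j,\omega})^{-1}$ of Dirichlet eigenfunctions on an interval of length $l_N^{j,\omega}$, bounds the resulting pair count from below by $\EnsembleExpectation{a^\ast(\varphi)a(\varphi)}^2$ up to the stated prefactor. To convert this configuration-level Cauchy--Schwarz bound into an expectation inequality of the correct form in the canonical ensemble, I would employ a Bogoliubov-type commutator inequality, in the spirit of \cite{PS86,BolteKernerInstability}, with test operator built from $a(\varphi)$ and $a^\ast(\varphi)$.

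Combining the upper and lower bounds gives
\[
\frac{\EnsembleExpectation{a^\ast(\varphi)a(\varphi)}}{L_N} \;\le\; C'\,\sqrt{\frac{(l_N^{j,\omega})^3}{b_N\,A_N^3\,L_N}}.
\]
Since $l_N^{j,\omega} \le \llargest \le \alpha\nu^{-1}\ln(L_N)$ for $N$ sufficiently large $\mathds{P}$-almost surely (Lemma~\ref{LargestComponent}), the right-hand side is dominated by $C''\sqrt{\ln^3(L_N)/(b_N\,A_N^3\,L_N)}$, which tends to zero under either of the two sets of assumptions on $(A_N, b_N)$. By Remark~\ref{DefMacII}, this is precisely the absence of $\mathds{P}$-almost sure macroscopic occupation of $(\varphi_N^{k,\omega})_{N \in \mathds N}$.

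The main obstacle is the quadratic lower bound on $\EnsembleExpectation{W_N(\omega)}$: naive operator manipulations in second quantization produce cross-terms of the type $a^\ast(\varphi)a^\ast(\psi)a(\psi')a(\varphi)$ with $\psi,\psi' \perp \varphi$ which are not sign-definite, and controlling them is exactly what the Bogoliubov inequality is designed for. The cubic powers of $A_N$ and $\ln(L_N)$ entering the hypothesis reflect the loss incurred at this step compared with the naive mean-field guess $b_N A_N / l_N^{j,\omega}$ for the interaction coefficient.
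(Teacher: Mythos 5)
Your proposal follows essentially the same route as the paper: an upper bound on the interaction energy from Lemma~\ref{EnergyDensity}, a decomposition of the supporting subinterval into cells of width $A_N$ on which $U_N \ge b_N$, a lower bound on the pair energy quadratic in $\EnsembleExpectation{a^\ast(\varphi)a(\varphi)}$ via the estimates of eqs.~(14)--(16b) in \cite{PS86}, and Lemma~\ref{LargestComponent} to control the number of cells, the paper merely phrasing the conclusion as a contradiction rather than a direct bound. The only substantive mismatch is that the key quadratic step is not a Bogoliubov commutator inequality but plain Cauchy--Schwarz with respect to the Gibbs state combined with CCR normal ordering (which produces the additive $\rho L_N$ correction your schematic bound omits, harmless here), and no sup-norm concentration of the eigenfunction is needed.
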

\begin{proof} We prove the statement for the one-particle ground state $\varphi^{0,\omega}_{N}$ only, the generalization to all other eigenstates being obvious. The proof follows from a suitable adaptation of the proof of Proposition~1 in \cite{PS86}. The basic idea is to show that, under the assumptions of Theorem~\ref{CondensateXXX} and given that there was a set $\widetilde \Omega \subset \Omega$ with $\mathds P(\widetilde \Omega) > 0$ such that $(\varphi^{0,\omega}_{N})_{N \in \mathds N}$ is macroscopically occupied for all $\omega \in \widetilde \Omega$, then for all $\omega \in \widetilde \Omega$ the energy density would go to infinity along a subsequence, in contradiction with Lemma~\ref{EnergyDensity}.
	
	Hence, suppose there is a set $\widetilde \Omega \subset \Omega$ with $\mathds P(\widetilde \Omega) > 0$ and such that $(\varphi^{0,\omega}_{N})_{N \in \mathds N}$ is macroscopically occupied for all $\omega \in \widetilde \Omega$. Let $\omega \in \widetilde \Omega$ be given, and let
	$z_N^{0,\omega} \in \Lambda_{N}$ denote the starting point of the subinterval on which the ground state is supported. We start with the basic estimate
	\begin{equation*}\begin{split}
	\frac{\EnsembleExpectation{H_N(\omega)}}{L_N} &\geq \epsilon_N^{0,\omega} \frac{N}{L_N} \\
	& \quad +\frac{1}{2L_N}\int_{-L_N/2}^{+L_N/2} \ud x \int_{-L_N/2}^{+L_N/2} \ud y \ U(x-y) \EnsembleExpectation{a^{\ast}(x)a^{\ast}(y)a(x)a(y)}\\ 
	& \geq \epsilon_N^{0,\omega} \frac{N}{L_N}\\
	+\frac{b_N}{2L_N}\sum_{j=1}^{\lceil l^{(1),\omega}_{N,>}/A_N\rceil}&\int_{z_N^{0,\omega}+(j-1)A_N}^{\min\{z_N^{0,\omega}+jA_N, L_N/2\}}\ud x\int_{z_N^{0,\omega}+(j-1)A_N}^{\min\{z_N^{0,\omega}+jA_N, L_N/2\}}\ud y \ \EnsembleExpectation{a^{\ast}(x)a^{\ast}(y)a(x)a(y)}\ .
	\end{split}
	\end{equation*}
	Since one has $\epsilon_N^{0,\omega}= \pi^2/ (l^{(1),\omega}_{N,>})^2$, the first term converges to zero by Lemma~\ref{LargestComponent}. The rest of the proof then consists in finding a suitable lower bound for the term
	\begin{equation*}\begin{split}
	\frac{b_N}{2L_N}\sum_{j=1}^{\lceil  l^{(1),\omega}_{N,>}/A_N\rceil}\int_{z_N^{0,\omega}+(j-1)A_N}^{\min\{z_N^{0,\omega}+jA_N, L_N/2\}}\ud x\int_{z_N^{0,\omega}+(j-1)A_N}^{\min\{z_N^{0,\omega}+jA_N, L_N/2\}}\ud y \ &  \EnsembleExpectation{a^{\ast}(x)a^{\ast}(y)a(x)a(y)}\\
	&:=	\frac{b_N}{2L_N}\sum_{j=1}^{\lceil  l^{(1),\omega}_{N,>}/A_N\rceil}B^{(j)}_{N}\ .
	\end{split}
	\end{equation*}
	For $j=1,...,\lceil  l^{(1),\omega}_{N,>}/A_N\rceil$ we set $\varphi^{0,\omega,(j)}_{N}:=\varphi^{0,\omega}_{N}\mathds 1_{(z_N^{0,\omega}+(j-1)A_N,z_N^{0,\omega}+jA_N)}$.
	
	Using exactly the same estimates as in eqs.~(14)-(16b) in \cite{PS86} we obtain
	\begin{equation*}
	\sum_{i,j=1}^{\lceil l^{(1),\omega}_{N,>}/A_N \rceil} \left( \EnsembleExpectation{a^{\ast}(\varphi^{0,\omega,(i)}_{N})a(\varphi^{0,\omega,(j)}_{N})} \right)^4 \leq \left(\sum_{j=1}^{\lceil l^{(1),\omega}_{N,>}/A_N \rceil}B_N^{(j)}+\rho L_N\right)^2\ .
	\end{equation*}
	Then, applying the standard estimate $\left|\sum_{i=1}^{n}x_j \right|^2 \leq n \sum_{i=1}^{n}|x_i|^2$ one concludes
\begin{equation*}\begin{split}
\frac{1}{\lceil l^{(1),\omega}_{N,>}/A_N \rceil^6}\left(\sum_{i,j=1}^{\lceil l^{(1),\omega}_{N,>}/A_N \rceil} \EnsembleExpectation{a^{\ast}(\varphi^{0,\omega,(i)}_{N})a(\varphi^{0,\omega,(j)}_{N})} \right)^4&=\frac{1}{\lceil l^{(1),\omega}_{N,>}/A_N \rceil^6} \left( \EnsembleExpectation{a^{\ast}(\varphi^{0,\omega}_{N})a(\varphi^{0,\omega}_{N})} \right)^4 \\
&\leq \left(\sum_{j=1}^{\lceil l^{(1),\omega}_{N,>}/A_N \rceil}B_N^{(j)}+\rho L_N\right)^2\ .
\end{split}
\end{equation*}
This yields, employing Lemma~\ref{LargestComponent},
\begin{equation*}\begin{split}
\frac{b_N}{2L_N}\sum_{j=1}^{\lceil l^{(1),\omega}_{N,>}/A_N\rceil} B^{(j)}_{N} &\geq \frac{b_N}{2 L_N \lceil l^{(1),\omega}_{N,>}/A_N \rceil^3} \langle a^{\ast}(\varphi^{0,\omega}_{N})a(\varphi^{0,\omega}_{N}) \rangle^{2}_{\DensityMatrixNbeta}-\frac{b_N\rho}{2} \\
&\geq b_N\left( \frac{\nu^3A^{3}_{N}N}{16 \alpha^{3} \rho \ln^3(N)}\left(\frac{ \langle a^{\ast}(\varphi^{0,\omega}_{N})a(\varphi^{0,\omega}_{N}) \rangle_{\DensityMatrixNbeta}}{L_N}\right)^2-\frac{\rho}{2}\right)\ .
\end{split}
\end{equation*}
for all $N$ large enough and some $\alpha > 4$. Now, in the first case we have that the sequence $(b_N)_{N \in \mathds{N}}$ is bounded away from zero and that the term inside the brackets diverges along a subsequence. In the second case we multiply $b_N$ into the bracket: The second term $b_N\rho/2$ then converges to zero whereas the first term diverges, again along a subsequence.
\end{proof}
\begin{remark}\label{RemarkLocalII} From the proof of Theorem~3.4 in \cite{BolteKernerInstability} it follows that Theorem~\ref{CondensateXXX} also holds for sequences of normalized one-particle states $(\varphi_{N})_{N \in \mathds N}$, $\varphi_{N} \in \mathrm{L}^2(\Lambda_{N})$ for all $N \in \mathds N$, for which there exists a sequence $(x_N)_{N \in \mathds{N}} \subset \mathds{R}$ such that
	\begin{equation*}
	|\varphi_N(x)| \leq \frac{C}{|x-x_N|^{1+\varepsilon}}\ 
	\end{equation*}
	holds for some constants $0 < \varepsilon \leq 1/2$, $C > 0$ (independent of $N$), all $N \in \mathds{N}$ and almost all $x \in \mathds{R}$ with $|x-x_N| > R$ and $R$ independent of $N$.
\end{remark}
\begin{remark}\label{Remark SYZ} In view of Remark~\ref{RemarkLocalII} it is interesting to refer to the paper \cite{SeiYngZag12} where BEC into the minimizer of the Gross--Pitaevskii functional is investigated for the LS model with two-particle interaction of the Lieb--Liniger type at zero temperature. In this paper the authors mention that, given the strength of the pair interaction is relatively large in a certain scaling limit, then the minimizer has support on all subintervals $I_N^{j,\omega} \subset \Lambda_N$. In other words, the Bose--Einstein condensate is delocalized, in agreement with Remark~\ref{RemarkLocalII}.
	
	The authors also mention that, in a scaling limit where the strength of the pair interaction converges to zero fast enough, the minimizer occupies only a small fraction of all subintervals $I_N^{j,\omega} \subset \Lambda_N$ (the authors refer to the ``localization regime''). Comparing this with Theorem~\ref{CondensateXXX} we conclude the following: Let $(\varphi_{N})_{N \in \mathds N}$, $\varphi_N \in \mathrm{L}^2(\Lambda_N)$ for all $N \in \mathds N$, be a sequence of normalized one-particle states such that $\varphi_N$ is supported only on at most $cN^{\gamma}$ subintervals $I_N^{j,\omega} \subset \Lambda_N$ where $c > 0$ is some fixed constant and $0 \leq \gamma < 1$. From the proof of Theorem~\ref{CondensateXXX} we conclude that $(\varphi_{N})_{N \in \mathds N}$ cannot be macroscopically occupied given that $(b_N)_{N \in \mathds{N}}$ is bounded away from zero and that
	\begin{equation*}
	\frac{A^3_N N^{1-3\gamma}}{\ln^3(N)} \longrightarrow \infty\ ,\quad \text{as}\ N \longrightarrow \infty .
	\end{equation*}
	%
	%
	Hence, assuming in addition that $A_N:=aN^{-\alpha}$ for some $0 <\alpha \leq 1/3$ and $a > 0$, we see that if $(\varphi_N)_{N \in \mathds N}$ is macroscopically occupied, then necessarily $\gamma \geq 1/3-\alpha$. As a consequence, the sequence states $(\varphi_{N})_{N \in \mathds N}$ cannot be too localized.

\end{remark}
\begin{remark} It is also worth to mention that we may choose a sequence of two-particle interaction potentials $(U_N)_{N \in \mathds{N}}$ that approximate a Dirac-$\delta$ contact interaction in the thermodynamic limit. Note that such an interaction is frequently studien in the context of BEC in interacting systems~\cite{Lieetal05,SeiYngZag12,KPS18}. 
	
	Indeed, one picks a non-negative function $\varphi \in L^1(\mathds{R})$ with $\int_{\mathds{R}}\varphi(x)\ \ud x =1$ and sets 
	\begin{equation*}
	U_N(x):=\frac{1}{\epsilon_N}\varphi\left(\frac{x}{\epsilon_N}\right)\ ,
	\end{equation*}
	where $(\epsilon_N)_{N \in \mathds{N}}$ is a sequence converging to zero. To be more precise, we may choose 
	\begin{equation*}
	\varphi(x):=
	\begin{cases}
	1/2 \quad \text{for} \quad |x| \leq 1 \ ,\\
	0 \quad \text{else}\ .
	\end{cases}
	\end{equation*}
	Comparing this with \eqref{weakCond}, we identify $b_N:=1/(2\epsilon_N)$ and $A_N:=\epsilon_N$. Consequently, in Theorem~\ref{CondensateXXX} the first case applies and the one-particle eigenstates of the LS Hamiltonian are not macroscopically occupied given that
	\begin{equation*}
		\frac{\epsilon^3_N N}{\ln^3(N)} \longrightarrow \infty\ ,\quad \text{as}\ N \longrightarrow \infty \ .
	\end{equation*}
	In this sense, choosing a sequence $(\epsilon_N)_{N \in \mathds{N}}$ that converges to zero not too fast, Theorem~\ref{CondensateXXX} suggests that repulsive two-particle contact interactions of the Lieb-Liniger type also destroy BEC in the ground state of the Luttinger--Sy model.
\end{remark}

%
\appendix

\section{An auxiliary result}\label{Appendix}
In this appendix we present one probabilistic result for the LS model which is used to establish Lemma~\ref{EnergyDensity}. We refer to Section~\ref{SectionBasics} for the notation used.
\begin{lemma} \label{Lemma Anzahl Intervalle mit Mindestlaenge 3}
 For $\mathds P$-almost all $\omega \in \Omega$ exists an $\widetilde N \in \mathds N$ such that for all $N \ge \widetilde N$,
 \begin{equation*}
  \#\{j \in \mathds Z : l_N^{j,\omega} \ge 3\} \ge \dfrac{\nu}{4\mathrm{e}^{3\nu}\rho} N \ .
 \end{equation*}
\end{lemma}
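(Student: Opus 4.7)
The plan is to derive the result from the strong law of large numbers for the underlying Poisson point process, with the boundary of $\Lambda_N$ contributing only an $O(1)$ correction. The input I would use from Section~\ref{SectionBasics} is that the gap lengths $\hat l^{j,\omega}$, $j \in \mathds{Z} \setminus \{0\}$, are i.i.d.\ exponential random variables with parameter $\nu$, so in particular $\mathds{P}(\hat l^{j,\omega} \ge 3) = \mathrm{e}^{-3\nu}$.

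First, I would introduce the auxiliary counting variable
\[
S_N(\omega) := \#\{j \in \mathds{Z} : \hat x^{j,\omega} \in \Lambda_N \text{ and } \hat l^{j,\omega} \ge 3\},
\]
that is, the number of Poisson points in $\Lambda_N$ followed by a gap of length at least $3$. By Campbell's formula, $\mathds{E}[S_N] = \nu \mathrm{e}^{-3\nu} L_N$, and by the strong law of large numbers for stationary ergodic point processes applied to the Poisson process marked by the indicators $\mathbf 1_{[\hat l^{j,\omega} \ge 3]}$,
\[
\lim_{N \to \infty} \frac{S_N(\omega)}{L_N} = \nu \mathrm{e}^{-3\nu} \quad \text{for } \mathds{P}\text{-a.e.\ } \omega \in \Omega.
\]
Hence, for almost every $\omega$ there exists $\widetilde N_1(\omega) \in \mathds N$ such that $S_N(\omega) \ge \tfrac{1}{2} \nu \mathrm{e}^{-3\nu} L_N$ for all $N \ge \widetilde N_1(\omega)$.

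Next I would compare $S_N$ with $\#\{j \in \mathds{Z} : l_N^{j,\omega} \ge 3\}$. If both $\hat x^{j,\omega}$ and $\hat x^{j+1,\omega}$ lie in $\Lambda_N$, then the subinterval $(\hat x^{j,\omega}, \hat x^{j+1,\omega})$ sits entirely inside $\Lambda_N$ and hence $l_N^{j,\omega} = \hat l^{j,\omega}$. At most one of the indices counted by $S_N$ fails this condition, namely the one corresponding to the largest Poisson point in $\Lambda_N$, and so
\[
\#\{j \in \mathds{Z} : l_N^{j,\omega} \ge 3\} \ge S_N(\omega) - 1.
\]
Combining with the previous bound and using $L_N = N/\rho$ then yields, for all sufficiently large $N$,
\[
\#\{j \in \mathds{Z} : l_N^{j,\omega} \ge 3\} \ge \frac{\nu \mathrm{e}^{-3\nu}}{2\rho} N - 1 \ge \frac{\nu}{4 \mathrm{e}^{3\nu} \rho} N,
\]
which is the desired estimate.

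The principal technical step is the almost-sure convergence of $S_N / L_N$; one can either invoke the ergodic theorem for stationary marked point processes, or argue more elementarily via Chernoff bounds on Poisson tails combined with Borel--Cantelli along the subsequence $N_k = 2^k$ together with a routine interpolation using that $N \mapsto \Lambda_N$ is increasing. The generous safety factor---only a quarter of the asymptotic density is required---ensures that the finitely many deviations and the $O(1)$ boundary correction are absorbed harmlessly.
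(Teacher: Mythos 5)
Your proof is correct, and it reaches the bound by the same basic mechanism as the paper --- a law of large numbers showing that the number of gaps of length at least $3$ per unit volume is asymptotically $\nu \mathrm{e}^{-3\nu}$, followed by an $O(1)$ boundary correction and a generous safety factor --- but the technical implementation is genuinely different. The paper deliberately avoids point-process machinery: it applies the classical strong law of large numbers twice, once to the i.i.d.\ exponential gap lengths $\hat l^{j,\omega}$ via the empirical distribution function over the deterministic index set $J_k$, and once to the Poisson counts in $(0,L_N/2)$ and $(-L_N/2,0)$ to guarantee that at least $\lceil \nu N/(4\rho)\rceil+1$ consecutive gaps on each side of the origin lie entirely inside $\Lambda_N$; combining the two yields the claim. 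You instead treat the single functional $S_N$ and invoke Campbell's formula plus the ergodic theorem for stationary marked point processes (the Palm distribution of the forward gap is indeed $\mathrm{Exp}(\nu)$ by Slivnyak's theorem, so the expectation $\nu\mathrm{e}^{-3\nu}L_N$ and the a.s.\ limit are right, and there is no size-biasing issue since you count gaps seen from points rather than from locations). This is more compact but rests on a heavier theorem; note that your sketched elementary alternative via Chernoff bounds would need more care than ``routine interpolation'' suggests, because $S_N$ is not a sum of independent terms --- the condition $\hat x^{j,\omega}\in\Lambda_N$ couples the summands to the point configuration --- and disentangling this is precisely what the paper's two-step decomposition accomplishes. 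Your boundary bookkeeping, $\#\{j: l_N^{j,\omega}\ge 3\}\ge S_N-1$ with only the index of the largest Poisson point in $\Lambda_N$ possibly lost, is correct and in fact slightly cleaner than the paper's.
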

\begin{proof}
 For all $k \in \mathds N$ we define the set $J_k := \{ -k, -k + 1, \ldots, k - 1, k\} \backslash \{0\} \subseteq \mathds N$.
 
Let
 \begin{equation*}
  F_{2k}^{\nu,\omega}(l):= \dfrac{1}{2k} \sum\limits_{j \in J_k} \mathds 1_{\hat l^{j,\omega} < l} = \dfrac{1}{2k} \#\{ j \in J_k: \hat l^{j,\omega} < l\}
 \end{equation*}
 be the empirical distribution function with respect to the independent, identically distributed random variables $\{\hat l^{j,\omega} : j \in \mathds Z \backslash\{0\}\}$ with common probability density function $\nu \mathrm{e}^{-\nu l} \mathds 1_{(0,\infty)}(l), l \in \mathds R$.
 
By the strong law of large numbers, we $\mathds P$-almost surely have $\lim_{k \to \infty} F_{2k}^{\nu,\omega}(3) = 1 - \mathrm{e}^{-3\nu}$.
Therefore, there exists a set $\widetilde \Omega_1 \subset \Omega$ with $\mathds P(\widetilde \Omega_1) =1$ and the following property: For all $\omega \in \widetilde \Omega_1$ there exists an $\widetilde K \in \mathds N$ such that for all $k \ge \widetilde K$,
\begin{equation*}
 \#\{ j \in J_k : \hat l^{j,\omega} \ge 3\} \ge \dfrac{1}{2 \mathrm{e}^{3\nu}} 2k \ .
\end{equation*}
%


 On the other hand, there is also a set $\widetilde \Omega_2 \subset \Omega$ with $\mathds P(\widetilde \Omega_2) =1$ and the following property: For all $\omega \in \widetilde \Omega_2$ there exists a $\widetilde N \in \mathds N$ such that for all $N \ge \widetilde N$, $\kappa_{N}^{(1),\omega} \ge \lceil (1/4) \nu \rho^{-1} N \rceil + 1$ and $\kappa_{N}^{(2),\omega} \ge \lceil (1/4) \nu \rho^{-1} N \rceil + 1$,
where $\kappa_{N}^{(1),\omega}$ and $\kappa_{N}^{(2),\omega}$ are the number of the atoms of the Poisson random measure within $(0,+L_N/2)$ and $(-L_N/2,0)$, respectively. 

To conclude, for $\mathds{P}$-almost all $\omega \in \Omega$ and for all $N \ge \max\{\widetilde N, 2 \rho \nu^{-1} \widetilde K \}$ one has
\begin{equation*}
 \#\{j \in \mathds Z : l_N^{j,\omega} \ge 3\} \ge \#\{j \in J_{\lceil \nu N / (4 \rho) \rceil} : \hat l^{j,\omega} \ge 3\} \ge \dfrac{1}{2\mathrm{e}^{3\nu}} \dfrac{\nu N}{2\rho} \ .
\end{equation*}
\end{proof}

	{\small
		\bibliographystyle{amsalpha}
		\bibliography{HardCoreLS}}

\providecommand{\bysame}{\leavevmode\hbox to3em{\hrulefill}\thinspace}
\providecommand{\MR}{\relax\ifhmode\unskip\space\fi MR }
\providecommand{\MRhref}[2]{%
  \href{http://www.ams.org/mathscinet-getitem?mr=#1}{#2}
}
\providecommand{\href}[2]{#2}
\begin{thebibliography}{KPS19b}

\bibitem[ABS20]{SchleinBEC}
A.~Adhikari, C.~Brennecke, and B.~Schlein, \emph{{Bose--Einstein condensation
  beyond the Gross--Pitaevskii regime}}, arXiv:2002.03406 (2020).

\bibitem[AP87]{PuleAonghusa87}
P.~M. Aonghusa and J.~V. Pul{\'e}, \emph{{Hard cores destroy {B}ose--{E}instein
  condensation}}, Lett.~Math.~Phys. \textbf{14} (1987), no.~2, 117--121.

\bibitem[BK16]{BolteKernerInstability}
J.~Bolte and J.~Kerner, \emph{{Instability of {B}ose--{E}instein condensation
  into the one-particle ground state on quantum graphs under repulsive
  perturbations}}, J.~Math.~Phys. \textbf{57} (2016), 043301.

\bibitem[DS20]{SeiDeuch}
A.~Deuchert and R.~Seiringer, \emph{{Gross--{P}itaevskii limit of a homogeneous
  {B}ose gas at positive temperature}}, Archive for Rational Mechanics and
  Analysis \textbf{236} (2020), 1217–1271.

\bibitem[DSY19]{SeiDeuchYngvason}
A.~Deuchert, R.~Seiringer, and J.~Yngvason, \emph{{Bose--Einstein Condensation
  in a dilute, trapped gas at positive temperature}}, Comm.~Math.~Phys.
  \textbf{368} (2019), 723–776.

\bibitem[Ein24]{PaperEinstein1}
A.~Einstein, \emph{{Quantentheorie des einatomigen idealen Gases}}, Sitzber.
  Kgl. Preuss. Akad. Wiss. (1924), 261--267.

\bibitem[Ein25]{PaperEinstein2}
\bysame, \emph{{Quantentheorie des einatomigen idealen Gases}}, II. Abhandlung,
  Sitzber. Kgl. Preuss. Akad. Wiss. (1925), 3--14.

\bibitem[Kin93]{kingman1993poisson}
J.~F.~C. Kingman, \emph{{Poisson processes}}, Clarendon Press, 1993.

\bibitem[KL73]{kac1973bose}
M.~Kac and J.~M. Luttinger, \emph{{Bose--Einstein condensation in the presence
  of impurities}}, J. Math. Phys. \textbf{14} (1973), 1626--1628.

\bibitem[KL74]{kac1974bose}
\bysame, \emph{{Bose--Einstein condensation in the presence of impurities.
  II}}, J. Math. Phys. \textbf{15} (1974), 183--186.

\bibitem[KPS19a]{KPS18}
J.~Kerner, M.~Pechmann, and W.~Spitzer, \emph{{Bose--Einstein condensation in
  the Luttinger--Sy model with contact interaction}}, Ann. Henri Poincar{\'e}
  \textbf{20} (2019), 2101--2134.

\bibitem[KPS19b]{KPS182}
\bysame, \emph{{On Bose--Einstein condensation in the Luttinger--Sy model with
  finite interaction strength}}, J. Stat. Phys. \textbf{174} (2019),
  1346--1371.

\bibitem[KV14]{VeniaminovKlopp}
F.~Klopp and N.~A. Veniaminov, \emph{{Interacting electrons in a random medium:
  a simple one-dimensional model}}, arxiv:1408.5839 (2014).

\bibitem[LPZ04]{lenoble2004bose}
O.~Lenoble, L.~A. Pastur, and V.~A. Zagrebnov, \emph{{Bose--Einstein
  condensation in random potentials}}, Comptes Rendus Physique \textbf{5}
  (2004), 129--142.

\bibitem[LS73]{luttinger1973bose}
J.~M. Luttinger and H.~K. Sy, \emph{{Bose--Einstein condensation in a
  one-dimensional model with random impurities}}, Phys. Rev.~A \textbf{7}
  (1973), 712--720.

\bibitem[LS02]{lieb2002proof}
E.~H. Lieb and R.~Seiringer, \emph{{Proof of Bose--Einstein condensation for
  dilute trapped gases}}, Phys. Rev. Lett. \textbf{88} (2002), 170409.

\bibitem[LS10]{LiebSeiringer}
\bysame, \emph{{The stability of matter in quantum mechanics}}, Cambridge
  University Press, 2010.

\bibitem[LSSY05]{Lieetal05}
E.~H. Lieb, R.~Seiringer, J.~P. Solovej, and J.~Yngvason, \emph{{The
  mathematics of the {B}ose gas and its condensation}}, {Oberwolfach Seminars},
  vol.~34, Birkh{\"a}user Verlag, Basel, 2005.

\bibitem[LVZ03]{LVZ03}
J.~Lauwers, A.~Verbeure, and V.~A. Zagrebnov, \emph{{Proof of
  {B}ose--{E}instein condensation for interacting gases with a one-particle
  gap}}, J.~Phys.~A \textbf{36} (2003), 169--174.

\bibitem[LZ07]{LenobleZagrebnovLuttingerSy}
O.~Lenoble and V.~A. Zagrebnov, \emph{{Bose--{E}instein condensation in the
  {L}uttinger--{S}y model}}, Mark.~Proc.~Rel.~Fields \textbf{13} (2007),
  441--468.

\bibitem[Mic07]{M07}
A.~Michelangeli, \emph{{Reduced density matrices and Bose--Einstein
  condensation}}, SISSA \textbf{39} (2007).

\bibitem[MR04]{MR04}
P.~A. Martin and F.~Rothen, \emph{{Many-body problems and quantum field
  theory}}, Springer-{V}erlag, 2004.

\bibitem[Pec19]{pechmanndiss}
M.~Pechmann, \emph{{Bose--Einstein condensation in random potentials}}, PhD
  thesis, Fern\-Universität in Hagen, 2019.

\bibitem[PO56]{PO56}
O.~Penrose and L.~Onsager, \emph{{{B}ose--{E}instein condensation and liquid
  helium}}, Phys. Rev. \textbf{104} (1956), 576--584.

\bibitem[Rue99]{ruelle1999statistical}
David Ruelle, \emph{{Statistical mechanics: Rigorous results}}, World
  Scientific, 1999.

\bibitem[Sme86]{PS86}
{P.~de} Smedt, \emph{{The effect of repulsive Interactions on
  {B}ose--{E}instein condensation}}, J. Stat. Phys. \textbf{45} (1986),
  201--213.

\bibitem[SYZ12]{SeiYngZag12}
R.~Seiringer, J.~Yngvason, and V.~A. Zagrebnov, \emph{{Disordered
  {B}ose--{E}instein condensates with interaction in one dimension}}, J. Stat.
  Mech.: Theory and Experiment \textbf{2012} (2012), P11007.

\bibitem[Ven13]{Veniaminov}
N.~A. Veniaminov, \emph{{The existence of the thermodynamic limit for the
  system of interacting quantum particles in random media}},
  Ann.~Henri~Poincar\'{e} \textbf{14} (2013), 63–94.

\bibitem[Ver11]{VerbeureBook}
A.~F. Verbeure, \emph{{Many-body boson systems: Half a century later}},
  Springer-Verlag, 2011.

\end{thebibliography}
	
	\end{document}